\newcommand{\norm}[1]{\left\lVert#1\right\rVert}
\newcommand{\pdev}[2]{\dfrac{\partial #1}{\partial #2}}
\newcommand{\minlamb}[1]{\lambda_{\tt{min}}(#1)}
\newcommand{\maxlamb}[1]{\lambda_{\tt{max}}(#1)}
\newcommand{\inv}[1]{#1^{-1}}
\def\rea{\mathds{R}}
\newtheorem{remark}{Remark}
\newtheorem{Theorem}{Theorem}
\newtheorem{definition}{Definition}
\def\rea{\mathbb{R}}
\def\BibTeX{{\rm B\kern-.05em{\sc i\kern-.025em b}\kern-.08em
		T\kern-.1667em\lower.7ex\hbox{E}\kern-.125emX}}
\title{\Large \bf
Tuning rules for passivity-based integral control for a class of mechanical systems
}
\author{Carmen Chan-Zheng, Mauricio Muñoz-Arias, and Jacquelien M.A. Scherpen
	\thanks{The work of C. Chan-Zheng is sponsored by the University of Costa Rica. The authors are with the Jan C. Willems Center for Systems and Control, and Engineering and Technology institute Groningen (ENTEG), Faculty of Science and Engineering at the University of Groningen,  9747 AG Groningen, The Netherlands (email: c.chan.zheng@rug.nl, m.munoz.arias@rug.nl, j.m.a.scherpen@rug.nl).}}
\begin{document}
\maketitle
\thispagestyle{empty}
\pagestyle{empty}

\begin{abstract}
This manuscript introduces a passivity-based integral control approach for fully-actuated mechanical systems. The novelty of our methodology is that we exploit the gyroscopic forces of the mechanical systems to exponentially stabilize the mechanical system at the desired equilibrium even in the presence of matched disturbances; additionally, we show that our approach is robust against unmatched disturbances. Furthermore, we provide tuning rules to prescribe the performance of the closed-loop system. We conclude this manuscript with experimental results obtained from a robotic arm.
\end{abstract}
\begin{keywords}
	exponential stability, passivity-based control, port-Hamiltonian, tuning
\end{keywords}
\section{Introduction}
	Amidst the modeling approaches for mechanical systems, we find the well-established port-Hamiltonian (pH) framework that provides insight into the roles that the interconnection, the dissipation, and the energy play in the physical system behavior \cite{vanderSchaft2017}. In general, the controllers designed under this framework preserve the pH structure for the closed-loop system, and among these strategies, we find the passivity-based control (PBC) approaches, a well-known set of techniques to control complex physical systems \cite{vanderSchaft2017,ortega2013passivity}. These approaches generally consist of two main steps: i) an energy shaping process that assigns the desired equilibrium to the closed-loop and ii) a damping injection step that ensures that such an equilibrium is asymptotically stable. Some results of PBC approaches for mechanical systems are found in \cite{borja2020new, ortega2004interconnection,gomez2004physical}.
	
	However, steady-state errors usually occur during the practical implementation of these controllers due to external disturbances or unmodeled phenomena. Thus, an integral action is added to the PBC approach to attenuate these disturbances. In \cite{ortega2004interconnection}, we find a passivity-based integral control (PBIC) approach where the integral action is applied directly to the passive outputs (i.e., the velocity for mechanical systems). However, the implementation of this approach is hindered when position control is the main objective as the disturbances may shift the equilibrium or render the system unstable. In \cite{dirksz2010power}, the integral action is applied to both non-passive and passive outputs; however, the controller destroys the pH structure. In \cite{dirksz2011port,donaire2009addition, ortega2012robust,romero2013robust}, we find PBIC approaches that preserve the pH structure but require a change of coordinates. Additional results that do not require a change of coordinates are found in \cite{ferguson2017integral,ferguson2019matched}, but they require a constant mass-inertia matrix and an estimation of the model of the system, respectively. None of the approaches are exponentially stable and lack tuning methodologies to prescribe performance other than stability.
	
	In this paper, we employ the change of coordinates strategy to develop our PBIC approach. The novelty of our technique is that it exploits the gyroscopic forces~--~customarily, these terms are canceled via feedback linearization~--~ of the mechanical system to exponentially stabilize it at the desired equilibrium. Our main contributions are summarized as:
	\begin{itemize}
		\item [(i)] A novel PBIC approach that exponentially stabilizes a class of mechanical systems at the desired equilibrium in the presence of matched disturbances.
		\item  [(ii)] A robustness analysis of our control approach against unmatched disturbances, where we are able to prove that the trajectories of the closed-loop converge exponentially to a small ball around the equilibrium.
		\item[(iii)] Tuning guidelines for prescribing: i) the upper bound of the rate of convergence of the trajectories; ii) the maximum permissible overshoot of the output; and iii) the nonlinear gain margin of the closed-loop system. 
	\end{itemize}
	The remainder of this paper is structured as follows: in Section \ref{sec:pre} we provide the description of the class of mechanical systems under-study, the theoretical backgrounds, and formulate the problem. Section \ref{sec: PBIC_approach} contains the main results of this manuscript, i.e., the development of our PBIC approach. Then, we provide a robustness analysis of our approach against unmatched disturbances in Section \ref{sec:unmatched}. In Section \ref{sec:exp} we apply our control methodology to a three degrees-of-freedom (DoF) robotic arm. We finalize this manuscript with some concluding remarks and future work in Section \ref{sec:concl}.

	\textbf{Notation}: We denote the $n\times n$ identity matrix as $I_n$ and the $n\times m$ matrix of zeros as $0_{n\times m}$. For a given smooth function $f:\rea^n\to \rea$, we define the differential operator $\nabla_x f:=\frac{\partial f}{\partial x}$ which is a column vector, and $\nabla^2_x f:=\frac{\partial^2 f}{\partial x^2}$. For a smooth mapping $F:\rea^n\to\rea^m$, we define the $ij-$element of its $n\times m$ Jacobian matrix as $(\nabla_x F)_{ij}:=\frac{\partial F_i}{\partial x_j}$. When clear from the context the subindex in $\nabla$ is omitted. For a given matrix $A\in\rea^{n\times n}$ and a given vector $x\in\rea^{n}$, we say that $A$ is \textit{positive definite (semi-definite)}, denoted as $A\succ0$  ($A\succeq0$), if $A=A^{\top}$ and $x^{\top}Ax>0$ ($x^{\top}Ax\geq0$) for all $x\in \rea^{n}-\{0_{n} \}$ ($\rea^{n}$). For a given vector $x\in\rea^n$ , we denote the Euclidean norm as $\norm{x}$. For $B=B^\top$, we denote by $\maxlamb{B}$ (resp. $\minlamb{B}$) as the maximum (resp. minimum) eigenvalue of $B$. We denote $\rea_+$ as the set of positive real numbers and $\rea_{\geq0}$ as the set $\rea_+\cup \{0\}$. Let $x,y\in\rea^{n}$, we define $col(x,y):=[x^\top y^\top]^\top$. We denote $e_i$ as the $i^{th}$ element of the canonical basis of $\rea^n$. 

\textbf{Caveat}: when possible, we omit the arguments to simplify the notation. 

\section{Preliminaries and Problem formulation}\label{sec:pre}
This section describes the class of mechanical systems under-study in a pH form. Then, we describe some stabilities concepts and conclude this paper with the problem formulation.
\subsection{Description of a class of mechanical systems}
Consider a standard $n$-DoF mechanical systems given by
	\begin{equation}\label{eq:sigma_PH_mechanical}
		\arraycolsep=1.6pt \def\arraystretch{1.4}
		\begin{array}{rcl}
		\begin{bmatrix}\dot{q}\\\dot{p}\end{bmatrix}	&=&\begin{bmatrix}
				0_{n\times n} &I_n\\
				-I_n & -D\left(x\right)\end{bmatrix}\begin{bmatrix}
 			\nabla_q H(x)\\
 			\nabla_p H(x)\\
 		\end{bmatrix}+\begin{bmatrix}
 		0_{n\times n}\\
 		G\left(q\right)
 		\end{bmatrix}u+\begin{bmatrix}d_u\\d_m\end{bmatrix}
 		\end{array}
	\end{equation}
	with $q,p\in\rea^{n}$ being the generalized configuration coordinates and the generalized momenta vectors, respectively; $x:=col(q,p)$; the fully damped matrix $D:\rea^n\times \rea^n\to\rea^{n\times n}$ which is positive definite; $u,y\in\rea^{n}$ the input and output vectors, respectively; the constant matched and unmatched disturbances $d_m,d_u\in\rea^{n}$, respectively; the input matrix $G\left(q\right)\in\rea^{n\times n}$ everywhere
	invertible, i.e., the pH system is \emph{fully actuated}; and the Hamiltonian of the system $H:\rea^n\times\rea^n\to\rea$ corresponds to
	\begin{equation*}
		H\left(x\right)=\dfrac{1}{2}p^{\top}M^{-1}\left(q\right)p+V\left(q\right)
		\label{eq:pre_H}
	\end{equation*}
	where $M:\rea^n\to\rea^{n\times n}$ is the positive definite
	mass inertia matrix and $V:\rea^n \to\rea$ is the potential energy.
	The set of assignable equilibria for \eqref{eq:sigma_PH_mechanical} is defined by
	\begin{equation*}
		\mathcal{E}:=\{{q},{p} \in \rea^n \ | \ {p}=0_n\}.
	\end{equation*}
    Then, based on the results of \cite{ReyesBaezThesis2019}, the system \eqref{eq:sigma_PH_mechanical} is equivalently rewritten as
	\begin{equation}\label{eq:sigma_PH_mechanical_rewritten}
		\arraycolsep=1pt \def\arraystretch{1.4}
		\begin{array}{rcl}
			\begin{bmatrix}\dot{q}\\\dot{p}\end{bmatrix}&=&
			\begin{bmatrix} 0_{n\times n} &I_n\\-I_n & -E\left(x\right)-D\left(x\right)\end{bmatrix}\begin{bmatrix}
				\nabla_q V(q)\\\nabla_p H(x)\end{bmatrix}\\
			&+& \begin{bmatrix}0_{n\times n}\\G\left(q\right)\end{bmatrix}u
			+\begin{bmatrix}d_u\\d_m\end{bmatrix}\\
			y &=& G\left(q\right)^{\top}\nabla_p H(x)
		\end{array} 
	\end{equation}
	where
	\begin{equation*}
		E\left(x\right)  := 
		S_{H}\left(x\right)
		-
		\dfrac{1}{2}
		\dot{M}\left(q\right)
	\end{equation*}
	with $S_{H}\left(x\right)$ being a skew-symmetric matrix, which the $kj$-th entry ($k,j=1,\hdots,n$) corresponds
	\begin{equation*}
		S_{H_{kj}}(x):=\dfrac{1}{2}\sum_{i=1}^n\left\{\pdev{M_{ki}}{q_j}(q)-\pdev{M_{ij}}{q_k}(q)\right\}e_i\inv{M}(q)p.
	\end{equation*}
	
	Note that $E(x)$ (and consequently $S_H(x)$) are calculated such that
	\begin{equation*}
		\nabla_q\left(\dfrac{1}{2}p^\top M^{-1}\left(q\right)p\right)
		=E\left(x\right) M^{-1}\left(q\right)p.
	\end{equation*}
	The force $E\left(q,p\right)M^{-1}\left(q\right)p$ is partly the Hamiltonian counterpart of the gyroscopic force contained in the Coriolis and the centrifugal terms in $C\left(q,\dot{q}\right)\dot{q}$ in the Euler-Lagrangian framework (see Remark B.7 in Appendix B.2.3 of \cite{ReyesBaezThesis2019} for further details together with the results of \cite{stadlmayr2008tracking} and \cite{sarras2012modeling}).
	The system \eqref{eq:sigma_PH_mechanical_rewritten} is referred as \textit{a mechanical port-Hamiltonian-like system} by \cite{ReyesBaezThesis2019}.

\subsection{Stability properties}
In this manuscript, we consider two stability properties to analyze our PBIC approach in closed-loop with \eqref{eq:sigma_PH_mechanical_rewritten}, that is, exponential stability (ES) and input-to-state stability (ISS). The former ensures that the trajectories of the system under-study are bounded by an exponential decay function; while the latter ensures that the trajectories are bounded for any initial conditions given that the input is also bounded (see \cite{khalil2002nonlinear}). Here, we provide some additional definitions, which we exploit in the sequel for tuning purposes. 

Consider \eqref{eq:sigma_PH_mechanical} with $u=0_n$ and  ${d:=col(d_u,d_m)}$. Moreover, assume that the Hamiltonian $H(x)$ has a local isolated minimum at $x_\star\in\rea^{2n}$, i.e., ${x_\star:=\arg\min H(x)}$, or equivalently, the system \eqref{eq:sigma_PH_mechanical} has a stable equilibrium at $x_\star$. 

First let $d=0_{2n}$, then we introduce the following.
\begin{definition}[Rate of convergence]
	Assume \eqref{eq:sigma_PH_mechanical} is ES. Then. the rate of convergence of \eqref{eq:sigma_PH_mechanical} is the exponential decay value of the trajectories of the system approaching the equilibrium $x_\star$, i.e., there exists some constants $k_1,k_2,k_3\in\rea_+$ such that
	\begin{equation*}
		\norm{x}\leq \sqrt{\dfrac{k_2}{k_1}}\norm{x_0}\exp\left\{-\dfrac{k_3}{2k_2}(t-t_0)\right\}
	\end{equation*}
	with $\dfrac{k_3}{2k_2}$ being the upper bound of the rate of
	convergence, $t_0 \geq 0$ is the initial time, and $x_0$ is the initial condition. See \cite{khalil2002nonlinear} for further details.\hfill$\square$
	~\\
\end{definition}

Now, let $d\neq0_{2n}$, then we provide the following ISS-related feature.
\begin{definition}[Nonlinear stability margin \cite{sontag1995characterizations}]\label{smdef}
	Consider the system \eqref{eq:sigma_PH_mechanical}. Then, the nonlinear stability margin is any function $\rho\in\mathcal{K}_\infty$ that verifies\footnote{We refer the reader to \cite{khalil2002nonlinear} for the definition of $\mathcal{K}_\infty$ and $\mathcal{K}\mathcal{L}$ functions.}
	\begin{equation}\label{smargin}
		\norm{d}\leq\rho(\norm{x}),~\text{and}~\norm{x}\leq\beta(\norm{x_0},t) ~\forall t\geq0,
	\end{equation}
	where $\beta\in \mathcal{K}\mathcal{L}$.
	Moreover, the system \eqref{eq:sigma_PH_mechanical} is said to be ISS if the conditions \eqref{smargin} are satisfied.
	\hfill$\square$
\end{definition}	

\subsection{Problem Formulation}
Propose a control methodology such that it renders the system \eqref{eq:sigma_PH_mechanical_rewritten} exponentially stable at the equilibrium $(q_\star,0_n)$~--~with $q_\star\in\rea^n$ being the desired configuration~--~even in the presence of a vector of constant external disturbances $d_m$ and provide tuning rules to prescribe the performance of the closed-loop system.

\section{An exponentially stable PBIC approach}\label{sec: PBIC_approach}
 Based on the alternative representation of \eqref{eq:sigma_PH_mechanical} as \eqref{eq:sigma_PH_mechanical_rewritten}, we develop a control law such that the mechanical system is rendered exponentially stable\footnote{A similar exponential stability analysis can be found in \cite{chan2021exponential,chan2022tuning}; however, we remark that those papers employ other PBC approaches that do not contain an integral action on the non-passive output.} at the desired equilibrium even under the influence of a matched disturbance, i.e., $d_m\neq0_n, d_u= 0_n$.  
Theorem \ref{Prop1} presents our main result. 

 	\begin{Theorem}\label{Prop1}
 		Consider an alternative pH-like system as \eqref{eq:sigma_PH_mechanical_rewritten} and define
 		\begin{equation*}
 		    \Gamma(x):=(E(x)+D(x)) \inv{M}(q).
 		\end{equation*}
 		Furthermore, consider $K_{p},K_i, K_d, M_d\succ0$; $q_\star \in \rea^{n}$; the change of coordinates
 		$$\bar{q}:=q-q_\star, ~\bar{p}:=p+K_p\bar{q};$$ 
 		and the dynamical extension $z \in \rea^n$ such that
 		\begin{equation*}
 			\dot{z} = -K_{i}\bar{y}
 			\label{eq:z_integral}
 		\end{equation*}
		with 
		\begin{equation}\label{eq:New_Output}
		    \bar{y}:=\inv{M}_d \bar{p}.
		\end{equation}
 		Then, the following statements hold true.
 		\begin{itemize}
 			\item[i)]	The control law 		
				\begin{equation}	\label{eq:Control_Law}
					\arraycolsep=1pt \def\arraystretch{1.4}
					\begin{array}{rcl}
						u&= G\left(q\right)^{-1}\Bigg[&
						\dfrac{\partial V\left(q\right)}{\partial 		q}-M_d\inv{M}(q)K_p\bar{q}-\Gamma(x)K_{p}\bar{q}
						\\&&-K_p\dot{q}-K_d\inv{M}_d\bar{p}+z
						\Bigg],
					\end{array}
				\end{equation}
				renders the system \eqref{eq:sigma_PH_mechanical_rewritten}  \textit{globally exponentially stable} in ${col}\left(q,p\right) = {col}\left(q_\star,0_n\right)$ even in presence of the vector of matched disturbances $d_m$ if and only if
				\begin{equation}\label{eq:cond1}
				\dfrac{1}{2}(\Gamma(x)M_d+M_d\Gamma^\top(x))+K_d\succeq 0.
				\end{equation}
			\item[ii)] The rate of convergence of the trajectories of the closed-loop \eqref{eq:sigma_PH_mechanical_rewritten}-\eqref{eq:Control_Law} is upper bounded by		   
				\begin{equation}\label{eq:rate}
					\dfrac{\mu\beta_{\max}}{1+\epsilon\beta_{\max}\maxlamb{\inv{M}_d}}
				\end{equation}
				for some $\beta_{\max},\epsilon,\mu\in\rea_+$.
			\item[iii)] The maximum overshoot of the output of the system is given by
			\begin{equation}\label{eq:ov}
				\xi:=\maxlamb{\inv{M}_d}\sqrt{\dfrac{\kappa_2}{\kappa_1}}\norm{\bar{x}_0}
			\end{equation}
			for some $\kappa_1,\kappa_2\in\rea_+$ and vector $\bar{x}_0\in\rea^{3n}$.
 		\end{itemize}
 	\end{Theorem}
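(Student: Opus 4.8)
The plan is to close the loop, pass to error coordinates that also absorb the constant disturbance, exhibit the result as a dissipative port-Hamiltonian-like system whose energy serves as a Lyapunov candidate, and then sharpen that candidate to obtain exponential stability and the tuning quantities. First I would substitute \eqref{eq:Control_Law} into \eqref{eq:sigma_PH_mechanical_rewritten}. Using $\nabla_p H=\inv{M}(q)p$ and $\dot q=\inv{M}(q)p$, the open-loop potential $\nabla_q V$ cancels and the gyroscopic/damping term $\Gamma(x)p$ is reshaped; writing $p=\bar p-K_p\bar q$ and differentiating $\bar p$ makes the velocity feedback $-K_p\dot q$ cancel. Since $d_m$ is constant, the shifted integrator $\tilde z:=z+d_m$ (i.e.\ $z_\star=-d_m$) moves the equilibrium to $col(\bar q,\bar p,\tilde z)=0_{3n}$, absorbing the disturbance. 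I expect the closed loop
\begin{equation*}
\begin{array}{rcl}
\dot{\bar q}&=&\inv{M}\bar p-\inv{M}K_p\bar q,\\
\dot{\bar p}&=&-M_d\inv{M}K_p\bar q-\Gamma\bar p-K_d\inv{M}_d\bar p+\tilde z,\\
\dot{\tilde z}&=&-K_i\inv{M}_d\bar p,
\end{array}
\end{equation*}
and I would take the storage function $H_d=\tfrac12\bar q^\top K_p\bar q+\tfrac12\bar p^\top\inv{M}_d\bar p+\tfrac12\tilde z^\top\inv{K}_i\tilde z$, which is positive definite and radially unbounded in $\bar x:=col(\bar q,\bar p,\tilde z)\in\rea^{3n}$.

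For statement i), I would differentiate $H_d$ along the closed loop. The cross terms $\bar q^\top K_p\inv{M}\bar p$ and $\bar p^\top\inv{M}_d\tilde z$ cancel against their transposes, leaving
\begin{equation*}
\dot H_d=-\bar q^\top K_p\inv{M}K_p\bar q-(\inv{M}_d\bar p)^\top\Big[\tfrac12(\Gamma M_d+M_d\Gamma^\top)+K_d\Big]\inv{M}_d\bar p.
\end{equation*}
Because $\inv{M}\succ0$ and $\inv{M}_d\succ0$, this quadratic form is nonpositive along every trajectory \emph{iff} \eqref{eq:cond1} holds, which supplies the equivalence: the ``if'' direction yields stability of the origin, while the ``only if'' follows because a violation of \eqref{eq:cond1} exposes a momentum direction along which $\dot H_d>0$, ruling out (exponential) stability.

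The decisive obstacle is that $\dot H_d$ is only negative \emph{semi}definite: it carries no damping in $\tilde z$. To promote stability to global exponential stability I would strictify $H_d$ by the order-$\epsilon$ cross term $V:=H_d-\epsilon\,\bar p^\top\inv{M}_d\tilde z$ with $\epsilon>0$ small (this is what makes $\maxlamb{\inv{M}_d}$ surface in the bounds). Its derivative contributes the missing damping $-\epsilon\tilde z^\top\inv{M}_d\tilde z$, together with indefinite $O(\epsilon)$ cross terms in $(\bar q,\tilde z)$ and $(\bar p,\tilde z)$ and a positive $\epsilon\,\bar p^\top\inv{M}_d K_i\inv{M}_d\bar p$; choosing $\epsilon$ small and completing squares via Young's inequality lets the $O(1)$ damping of $\bar q$ and $\bar p$ dominate these, giving $k_1\norm{\bar x}^2\le V\le k_2\norm{\bar x}^2$ and $\dot V\le-k_3\norm{\bar x}^2$. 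The hard part is twofold: (a) guaranteeing the momentum channel is genuinely damped, i.e.\ handling the directions where \eqref{eq:cond1} is non-strict, which forces a detectability argument (or an extra $(\bar q,\bar p)$ cross term) since $\bar p$ drives the already-damped $\bar q,\tilde z$ coordinates; and (b) making the estimates \emph{uniform in the state}, which requires global bounds on $\minlamb{\inv{M}(q)}$, $\maxlamb{\inv{M}(q)}$ and on $\Gamma(x)$ so that $k_1,k_2,k_3$ (equivalently $\beta_{\max},\mu,\epsilon$) are true constants, the price of a global claim for configuration-dependent inertia.

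With the explicit constants in hand, statements ii) and iii) are bookkeeping. For ii), I would evaluate the rate $\tfrac{k_3}{2k_2}$ from the rate-of-convergence definition: the $\epsilon$-perturbation inflates the upper constant as $k_2\propto 1+\epsilon\beta_{\max}\maxlamb{\inv{M}_d}$ while $k_3\propto\mu\beta_{\max}$, which collapses to \eqref{eq:rate}. For iii), since the output is $\bar y=\inv{M}_d\bar p$ with $\norm{\bar p}\le\norm{\bar x}$, the exponential estimate $\norm{\bar x(t)}\le\sqrt{\kappa_2/\kappa_1}\,\norm{\bar x_0}$ (maximal at $t_0$ and decaying thereafter) yields $\sup_t\norm{\bar y(t)}\le\maxlamb{\inv{M}_d}\sqrt{\kappa_2/\kappa_1}\,\norm{\bar x_0}$, which is exactly \eqref{eq:ov}.
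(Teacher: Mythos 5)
Your proposal follows essentially the same route as the paper's own proof: shift to $(\bar q,\bar p)$ coordinates with the integrator state translated by $d_m$ to absorb the matched disturbance, use the shaped energy $\bar H$ as the Lyapunov function whose semidefinite decay is equivalent to \eqref{eq:cond1}, strictify it with an $O(\epsilon)$ cross term coupling $\bar p$ and the shifted integrator state (the paper takes $-\epsilon\,\bar p^\top K_i^{-1}\bar z$, you take $-\epsilon\,\bar p^\top M_d^{-1}\tilde z$ -- an immaterial difference in weighting), and then obtain \eqref{eq:rate} and \eqref{eq:ov} from the quadratic sandwich bounds, the comparison lemma, and the output estimate, exactly as the paper does. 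The two technical points you flag as hard -- uniform-in-state bounds on $M^{-1}(q)$ and $\Gamma(x)$ for the global claim, and strictness of \eqref{eq:cond1} so that the momentum channel is genuinely damped after subtracting $\epsilon M_d^{-1}$ -- are real issues, and the paper's proof silently assumes them rather than resolving them.
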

    ~\\
    \begin{proof}
        To prove i), let $\bar{z}:=(z+d_m)$, then, note that the closed-loop \eqref{eq:sigma_PH_mechanical_rewritten}-\eqref{eq:Control_Law} is again a pH representation given by
        \begin{equation}\label{eq:Closed-loop}
        \arraycolsep=2.4pt \def\arraystretch{1.4}
        \begin{array}{rcl}
            \begin{bmatrix}
                \dot{\bar{q}}\\\dot{\bar{p}}\\\dot{\bar{z}}    
            \end{bmatrix}&=&\begin{bmatrix}
                -\inv{\bar{M}}&\inv{\bar{M}}M_d&0_{n\times n}\\
                -M_d\inv{\bar{M}}&-\bar{\Gamma} M_d-K_d&K_i\\
                0_{n\times n}&-K_i&0_{n\times n}
            \end{bmatrix}\begin{bmatrix}
                \nabla_{\bar{q}} \bar{H}\\
                \nabla_{\bar{p}} \bar{H}\\
                \nabla_{\bar{z}} \bar{H}
            \end{bmatrix}
        \end{array}
        \end{equation}
        with 
        \begin{equation*}
 	    \arraycolsep=1pt \def\arraystretch{1.4}
            \begin{array}{rcl}
                 \bar{M}(\bar{q}):= {M}(\bar{q}+q_\star) , ~
                 \bar{\Gamma}(\bar{q},\bar{p}):=\Gamma(\bar{q}+q_\star,\bar{p}-K_p\bar{q}),
            \end{array}
        \end{equation*}
        the new output as given in \eqref{eq:New_Output}, and the new Hamiltonian
        \begin{equation}\label{eq:New_Hamiltonian}
             \bar{H}(\bar{q},\bar{p},\bar{z})=\frac{1}{2}\bar{p}^\top\inv{M}_d \bar{p}+\frac{1}{2}\bar{q}^\top K_p \bar{q}+\frac{1}{2}\bar{z}^\top\inv{K}_i\bar{z}.
        \end{equation}

      Then, consider the new Hamiltonian as the Lyapunov candidate, which derivative corresponds to
        \begin{equation*}
             \arraycolsep=5pt \def\arraystretch{1.4}
                 \dot{\bar{H}}= -\nabla^\top\bar{H}\left[\begin{array}{rcl}
                 \inv{\bar{M}}&0_{n\times n}\\
                 0_{n\times n}&\bar{\Gamma}M_d+M_d\bar{\Gamma}^\top +K_d
            \end{array}\right]\nabla\bar{H}.
        \end{equation*}
        It follows that $\dot{\bar{H}}\leq 0$ if and only if \eqref{eq:cond1} holds. Thus, 
        the closed-loop \eqref{eq:sigma_PH_mechanical_rewritten}-\eqref{eq:Control_Law} is stable. \textit{Asymptotic stability} follows by invoking LaSalle's principle.  Note that
        \begin{equation*}
        \arraycolsep=1pt \def\arraystretch{1.4}
            \begin{array}{rcl}
                 \dot{\bar{H}}
                 \equiv0&\iff& \bar{q},\bar{p}=0 \implies \dot{\bar{p}}=0 \implies \bar{z}=0.
            \end{array}
        \end{equation*}
    
	    To prove \textit{exponential stability}, let $\bar{x}:=col(\bar{q},\bar{p},\bar{z})$. Then, consider a new Lyapunov candidate, 
	    \begin{equation}\label{eq:Lya}	    	S(\bar{x}):=\bar{H}(\bar{x})-\epsilon\nabla_{\bar{p}}^\top\bar{H}(\bar{x}) {M}_d \nabla_{\bar{z}}\bar{H}(\bar{x})
	    \end{equation}
	    for some\footnote{There always exists a sufficiently small $\epsilon$ such that $S(\bar{x})\in\rea_+$ for all $\bar{x}\neq col(q_\star,0_n,0_n)$.} $\epsilon\in \rea_+$. 
    
	    Note that the new Hamiltonian satisfies
	    \begin{equation}\label{eq:boundsHamiltonian}
	    	\dfrac{\beta_{\min}}{2}\norm{\bar{x}}^2\leq \bar{H}(\bar{x})\leq 	\dfrac{\beta_{\max}}{2}\norm{\bar{x}}^2
	    \end{equation}
	    with 
	    \begin{equation*}
    	\begin{array}{rcl}
    		\beta_{\min}&:=&\min\{\minlamb{{M_d}},\minlamb{K_p},\minlamb{\inv{K_i}}\},\\
    		\beta_{\max}&:=&\max\{\maxlamb{{M_d}},\maxlamb{K_p},\maxlamb{\inv{K_i}}\}.
    	\end{array}
    	\end{equation*}
	    Moreover, by employing Young's inequality, $S-\bar{H}$ satisfies
	    \begin{equation}\label{eq:boundsAdd}
	    	\norm{-\epsilon\nabla_{\bar{p}}^\top\bar{H}{M}_d 	\nabla_{\bar{z}}\bar{H}}\leq\dfrac{\epsilon\beta_{\max}^2\maxlamb{{M}_d}}{2} \norm{\bar{x}}^2.
	    \end{equation}
	    Thus, from \eqref{eq:boundsHamiltonian} and \eqref{eq:boundsAdd}, we get
	    \begin{equation}\label{eq:cond1_exp}
	    	\kappa_1\norm{\bar{x}}^2\leq S(\bar{x})\leq   \kappa_2\norm{\bar{x}}^2
	    \end{equation}
    	with
    	\begin{equation}\label{eq:k1k2}
    		\begin{array}{rcl}
    			\kappa_1&:=&\dfrac{\beta_{\min}-\epsilon\beta_{\max}^2\maxlamb{{M}_d}}{2},\\
    			\kappa_2&:=&\dfrac{\beta_{\max}+\epsilon\beta_{\max}^2\maxlamb{{M}_d}}{2}. 
    		\end{array}
    	\end{equation}
	    Note that there always exists a sufficiently small $\epsilon$ such that $\kappa_1\in\rea_+$.
    
    Then, the derivative of the Lyapunov candidate \eqref{eq:Lya} corresponds to
    \begin{equation}\label{eq:Sdot}
    	\dot{S}=-\nabla^\top \hat{H} \Upsilon \nabla \hat{H}
    \end{equation}
    with
    \begin{equation}\label{eq:Upsilon}
    	\arraycolsep=1pt \def\arraystretch{1.4}
    	\Upsilon(\bar{q},\bar{p}):=\left[ \begin{array}{cc}
    		\Upsilon_{11}&\Upsilon_{12}\\
    		\Upsilon_{12}^\top&\Upsilon_{33}
    	\end{array}\right]
    \end{equation}
    where
    \begin{equation*}
    	\arraycolsep=3pt \def\arraystretch{1.8}
    	\begin{array}{lcl}
    		\Upsilon_{11}&:=&\begin{bmatrix}
    			\inv{\bar{M}}&0_{n\times n}\\
    			0_{n\times n}&\upsilon_{22}
    		\end{bmatrix},~\Upsilon_{12}:=-\dfrac{\epsilon}{2}\begin{bmatrix}\inv{\bar{M}}M_d\\(\bar{\Gamma} M_d+K_d)^\top\end{bmatrix},\\
    		\upsilon_{22}&:=&K_d+\dfrac{1}{2}(\bar{\Gamma} M_d+M_d \bar{\Gamma}^\top)-\epsilon\inv{M_d},\\
    		\Upsilon_{33}&:=&\epsilon K_i.
    	\end{array}
    \end{equation*}
    Recall that \eqref{eq:Sdot} is negative definite if and only if $\Upsilon(\bar{q},\bar{p})\succ0$. 
    To verify the sign of $\Upsilon(\bar{q},\bar{p})$, we employ Schur complement analysis, i.e., there exists a sufficiently small $\epsilon\in\rea_+$ such that $\Upsilon_{11}\succ0$ and
    \begin{equation*}
    	\Upsilon_{33}-\Upsilon_{12}^\top \inv{\Upsilon_{11}}\Upsilon_{12}\succ0;
    \end{equation*}
    thus, $\Upsilon(\bar{q},\bar{p})\succ0$.
    
    Denote with $\mu\in\rea_+$ the minimum eigenvalue of \eqref{eq:Upsilon}, then it follows that
    \begin{equation}\label{eq:cond2_exp}
    	\dot{S}\leq-\mu\norm{\nabla\bar{H}}^2\leq-\mu\beta_{\max}^2\norm{\bar{x}}^2.
    \end{equation}
	Hence, from \eqref{eq:cond1_exp} and \eqref{eq:cond2_exp}, the closed-loop \eqref{eq:sigma_PH_mechanical_rewritten}-\eqref{eq:Control_Law} is exponentially stable at $col(q_\star,0_n)$ (See Theorem 4.10 from \cite{khalil2002nonlinear}).
	
	The global property follows from noting that $S(\bar{x})$ is radially unbounded, i.e.,
	\begin{equation*}
		\norm{\bar{x}}\to\infty\implies S(\bar{x})\to\infty.
	\end{equation*}

	To prove ii), note that from \eqref{eq:cond1_exp} and \eqref{eq:cond2_exp}, we get that
	\begin{equation*}
		\dot{S}\leq -\dfrac{2\mu\beta_{\max}}{1+\epsilon\beta_{\max}\maxlamb{\inv{M}_d}}	S;
	\end{equation*}
	then, it follows that by employing the comparison lemma (see \cite{khalil2002nonlinear}), we get that the trajectories of the closed-loop \eqref{eq:Closed-loop} are upper bounded by an exponential decay function given by
	\begin{equation}\label{eq:boundX}
		\norm{\bar{x}}\leq \sqrt{\dfrac{\kappa_2}{\kappa_1}}\norm{\bar{x}_0}\exp\left\{-\dfrac{\mu\beta_{\max}}{1+\epsilon\beta_{\max}\maxlamb{\inv{M}_d}}t\right\}
	\end{equation}
	with $\bar{x}_0\in\rea^{3n}$ being the initial conditions in the new coordinates, and $\kappa_1,\kappa_2$ are defined in \eqref{eq:k1k2}.
	
	To prove iii), consider \eqref{eq:boundX} and note that \eqref{eq:New_Output} verifies
	\begin{equation*}
	\arraycolsep=1pt \def\arraystretch{1.4}
		\begin{array}{rcl}
			\norm{\bar{y}}&\leq& \maxlamb{\inv{M}_d}	\norm{\bar{x}}\\
			&\leq&  \xi\exp\left\{-\dfrac{\mu\beta_{\max}}{1+\epsilon\beta_{\max}\maxlamb{\inv{M}_d}}t\right\}
		\end{array}
	\end{equation*}
	with $\xi$ defined as in \eqref{eq:ov}.
    \end{proof}
	~\\
	\begin{remark}
		Note that we can employ \eqref{eq:rate} and \eqref{eq:ov} as tuning guidelines as we can see the effect of the control parameters on the upper bound of the rate of convergence and the maximum permissible overshoot, respectively. 
		\hfill$\square$
		~\\
	\end{remark}
   \begin{remark}
		The change of variable $\bar{p}:=p+K_p\bar{q}$~--~introduced first in \cite{donaire2009addition}~--~is also employed in \cite{dirksz2011port,romero2013robust} to develop their respective PBIC approaches, where only asymptotic stability is achieved. The key difference of our approach with the latter methodologies is that we use the gyroscopic related forces (contained in $\Gamma(x)$) as part of the damping injection step to exponentially stabilize \eqref{eq:sigma_PH_mechanical_rewritten} at the desired equilibrium.
		\hfill$\square$
		~\\
	\end{remark}

	We have shown the effectiveness of control law \eqref{eq:Control_Law} against matched disturbances by selecting a convenient choice of Lyapunov candidate. In the sequel, we demonstrate that our approach is also effective against unmatched disturbance. 

\section{Unmatched disturbance attenuation properties of the PBIC}\label{sec:unmatched}
 In this section, we demonstrate that the control law \eqref{eq:Control_Law} also exponentially stabilizes the system \eqref{eq:sigma_PH_mechanical_rewritten} even in the presence of unmatched disturbances, i.e., $d_u\neq0_n$. 
 
 Observe that the closed-loop system \eqref{eq:Closed-loop} becomes
 \begin{equation}\label{eq:Closed-loop2}
 	\arraycolsep=1.8pt \def\arraystretch{1.4}
 	\begin{array}{rcl}
 		\begin{bmatrix}
 			\dot{\bar{q}}\\\dot{\bar{p}}\\\dot{{z}}    
 		\end{bmatrix}&=&\begin{bmatrix}
 			-\inv{\bar{M}}&\inv{\bar{M}}M_d&0_{n\times n}\\
 			-M_d\inv{\bar{M}}&-\bar{\Gamma} M_d-K_d&K_i\\
 			0_{n\times n}&-K_i&0_{n\times n}
 		\end{bmatrix}\begin{bmatrix}
 			\nabla_{\bar{q}} \hat{H}\\
 			\nabla_{\bar{p}} \hat{H}\\
 			\nabla_{z} \hat{H}
 		\end{bmatrix}+\begin{bmatrix}
 			d_u\\0_n\\0_n
 		\end{bmatrix}
 	\end{array}
 \end{equation}
with Hamiltonian defined as in \eqref{eq:New_Hamiltonian}.
Then, we exploit the same Lyapunov candidate \eqref{eq:Lya} to perform an ISS analysis on \eqref{eq:Closed-loop2}. 
The analysis follows a similar method as \cite{chan2022tuning} for another closed-loop system.
\begin{Theorem}\label{Prop2}
	Let $d_u\neq0_n$ and consider the control law \eqref{eq:Control_Law}. Then, the following statements are hold true.
	\begin{itemize}
		\item[i)] The closed-loop system \eqref{eq:Closed-loop2} is ISS with nonlinear stability margin
		\begin{equation}\label{eq:nonlinearsm}
			\rho(\norm{\bar{x}}):=g_m\norm{\bar{x}},
			\end{equation}
			with $g_m\in\rea_+$ being the \textit{gain margin} defined as
			\begin{equation}\label{eq:gm}
				g_m:=\dfrac{\mu\beta_{\max}^2\theta}{\maxlamb{K_p}}
			\end{equation}
			for $0<\theta<1$.
			\item[ii)] The trajectories of the closed-loop system \eqref{eq:Closed-loop2} converge exponentially to 
			$$\Omega_e:=\left\{\bar{x}\in\rea^{3n}|\norm{\bar{x}}=\dfrac{1}{g_m}\norm{d_u}\right\}$$ at the rate of convergence upper bounded by
			\begin{equation}\label{eq:rate2}
				\dfrac{\mu\beta_{\max}(1-\theta)}{1+\epsilon\beta_{\max}\maxlamb{{M}_d}};
			\end{equation}
			and the maximum overshoot of the output of the system is given by \eqref{eq:ov}.
	\end{itemize}
\end{Theorem}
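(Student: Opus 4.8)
The plan is to reuse the Lyapunov function $S$ from \eqref{eq:Lya} and treat the unmatched disturbance as an additive perturbation of the nominal bound on $\dot S$ already established in the proof of Theorem \ref{Prop1}. First I would compute $\dot S$ along the perturbed dynamics \eqref{eq:Closed-loop2}. Since the disturbance vector $col(d_u,0_n,0_n)$ enters only through the $\dot{\bar q}$-equation, the only contribution relative to \eqref{eq:Sdot}--\eqref{eq:cond2_exp} is $(\nabla_{\bar q}S)^\top d_u$. Observing that the correction term in \eqref{eq:Lya} reduces to $-\epsilon\bar p^\top\inv{K}_i\bar z$ (using $\nabla_{\bar p}\bar H=\inv{M}_d\bar p$ and $\nabla_{\bar z}\bar H=\inv{K}_i\bar z$, so $\inv{M}_d M_d=I_n$), which is independent of $\bar q$, I would conclude $\nabla_{\bar q}S=\nabla_{\bar q}\bar H=K_p\bar q$, whence the perturbation term is exactly $\bar q^\top K_p d_u$. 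I expect isolating this term cleanly to be the only delicate point of the argument.

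Then I would bound $\bar q^\top K_p d_u\leq\maxlamb{K_p}\norm{\bar x}\norm{d_u}$ and combine it with the nominal estimate \eqref{eq:cond2_exp} to obtain $\dot S\leq -\mu\beta_{\max}^2\norm{\bar x}^2+\maxlamb{K_p}\norm{\bar x}\norm{d_u}$. To extract the margin I would split the negative term by a factor $0<\theta<1$, rewriting the right-hand side as $-\mu\beta_{\max}^2(1-\theta)\norm{\bar x}^2$ plus the residual $-\mu\beta_{\max}^2\theta\norm{\bar x}^2+\maxlamb{K_p}\norm{\bar x}\norm{d_u}$. This residual is nonpositive precisely when $\norm{d_u}\leq g_m\norm{\bar x}$ with $g_m$ as in \eqref{eq:gm}, which is exactly the nonlinear stability margin $\rho$ of \eqref{eq:nonlinearsm}; invoking Definition \ref{smdef} then yields the ISS property, proving statement i).

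For statement ii) I would note that whenever $\norm{\bar x}>\tfrac{1}{g_m}\norm{d_u}$, i.e. outside $\Omega_e$, the residual is strictly negative and hence $\dot S\leq -\mu\beta_{\max}^2(1-\theta)\norm{\bar x}^2$. Using the upper bound $S\leq\kappa_2\norm{\bar x}^2$ from \eqref{eq:cond1_exp}--\eqref{eq:k1k2} together with $\kappa_2=\tfrac{1}{2}\beta_{\max}(1+\epsilon\beta_{\max}\maxlamb{M_d})$, this becomes $\dot S\leq -\tfrac{2\mu\beta_{\max}(1-\theta)}{1+\epsilon\beta_{\max}\maxlamb{M_d}}S$. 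Applying the comparison lemma exactly as in the proof of Theorem \ref{Prop1} gives exponential decay of $\norm{\bar x}$ toward $\Omega_e$ at the rate \eqref{eq:rate2}, and propagating this bound through $\norm{\bar y}\leq\maxlamb{\inv{M}_d}\norm{\bar x}$ recovers the same overshoot expression $\xi$ of \eqref{eq:ov}. The main obstacle throughout is purely bookkeeping—recognizing that only $\nabla_{\bar q}S$ couples to the disturbance and that it collapses to $K_p\bar q$—after which every remaining step mirrors the nominal analysis.
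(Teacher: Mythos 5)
Your proposal is correct and follows essentially the same route as the paper: the same Lyapunov function $S$ from \eqref{eq:Lya}, the same disturbance term $d_u^\top K_p\bar{q}$, the same $\theta$-splitting to extract the gain margin \eqref{eq:gm}, and the same comparison-lemma step yielding \eqref{eq:rate2} and \eqref{eq:ov}. The only difference is cosmetic: where the paper says ``via some computations,'' you explicitly verify that the cross term in $S$ collapses to $-\epsilon\bar{p}^\top\inv{K}_i\bar{z}$ and hence $\nabla_{\bar{q}}S=K_p\bar{q}$, which is a welcome clarification rather than a departure.
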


\begin{proof}
	To prove i), consider the bounds \eqref{eq:cond1_exp}; it follows that, via some computations, the derivative of the Lyapunov candidate corresponds to	
	\begin{equation*}
	\dot{S}=-\nabla^\top \bar{H} \Upsilon \nabla \bar{H}+ d_u^\top{K}_p\bar{q}
	\end{equation*}
	with $\Upsilon(\bar{q},\bar{p})$ as defined as in \eqref{eq:Upsilon}.
	
	Denote with $\mu\in\rea_+$ the minimum eigenvalue of \eqref{eq:Upsilon}, then we get that
	\begin{equation*}
		\dot{{S}}\leq-\mu\norm{\nabla\hat{H}}^2+\maxlamb{K_p}\norm{d_u}\norm{\bar{x}}.
	\end{equation*}
	Consider $0<\theta<1$ and the bounds \eqref{eq:boundsHamiltonian}, then we can rewrite the previous equation as
	\begin{equation*}
		\begin{array}{rcl}
			\dot{S}&\leq&-\mu\beta_{\max}^2\norm{\bar{x}}^2(1-\theta)+\maxlamb{K_p}\norm{d_u}\norm{\bar{x}}\\
			&&-\mu\beta_{\max}^2\theta\norm{\bar{x}}^2.
		\end{array}
	\end{equation*}
	Hence,
	\begin{equation}\label{eq:SdotISS}
		\dot{S}\leq -\mu\beta_{\max}^2\norm{\bar{x}}^2(1-\theta),~\forall~ \norm{\bar{x}}\in\Omega,
	\end{equation}
	with $\Omega:=\left\{\bar{x}\in\rea^{3n}|\norm{d_u}\leq \rho(\norm{\bar{x}})\right\}$, and $\rho(\norm{\bar{x}})$ is defined as in \eqref{eq:nonlinearsm}.
	
	Thus, from \eqref{eq:cond1_exp} and \eqref{eq:SdotISS},  the closed-loop system \eqref{eq:Closed-loop2} is ISS with nonlinear stability margin $\rho(\norm{\bar{x}})$ (see \cite{sontag1995characterizations} and Theorem 4.19 from \cite{khalil2002nonlinear}).
	
	To prove ii), from \eqref{eq:cond1_exp} and \eqref{eq:SdotISS}, we get that
	\begin{equation*}
		\dot{S}\leq- \dfrac{2\mu\beta_{\max}(1-\theta)}{1+\epsilon\beta_{\max}\maxlamb{{M}_d}}S;
	\end{equation*}
	then, by applying the comparison lemma~--~similar as in the proof of Theorem \ref{Prop1}~--~we get that the upper bound of the rate of convergence is given by \eqref{eq:rate2}, and the maximum overshoot of the output of the system is given by \eqref{eq:ov}.
\end{proof}
	~\\
	\begin{remark}
		Similar to Section \ref{sec: PBIC_approach}, we can exploit \eqref{eq:nonlinearsm}, \eqref{eq:gm}, \eqref{eq:rate2}, and \eqref{eq:ov} as tuning guidelines, since these expressions highlight the effects of the control parameters on the nonlinear gain margin, the upper bound of the rate of convergence, and the maximum permissible overshoot of the closed-loop system subjected to unmatched disturbances. Additionally, note that the gain margin \eqref{eq:gm} is the maximum growth of the norm of the disturbance with respect to the norm of the trajectories in which the closed-loop system remains ISS; that is, the closed-loop is more robust against unmatched disturbances for a higher gain margin.
		\hfill$\square$
	\end{remark}

\section{Experiments}\label{sec:exp}
This section illustrates the effectiveness of our control law \eqref{eq:Control_Law} to reject disturbances while prescribing the desired performance in terms of the upper bound of the rate of convergence and the maximum overshoot of the system. To this end, we employ the Philips Experimental Robotic Arm (PERA)~--~as depicted in Fig.\ref{pera}~--~which is a seven-DoF experimental robotic arm designed by Philips Applied Technologies \cite{rijs2010philips} to
mimic the motion of a human arm. To ease the presentation of our results, we reduce the model to three-DoF, namely, shoulder roll ($q_1$), elbow pitch ($q_2$), and elbow roll ($q_3$). The model of the PERA is given by \eqref{eq:sigma_PH_mechanical} with $n=3$,  $D(q,p)=0_{3\times 3}$, $G=I_3$, and we omit the  expressions for $M(q)$ and $U(q)$ due to space constraints. We refer the reader to \cite{ariasenergy} for further details; additionally, a MATLAB\textsuperscript{\textregistered} script to generate  $M(q)$ and $U(q)$ can be found in \cite{bol2}. 
\begin{figure}[t]
	\centering
	\includegraphics[width=0.6\columnwidth]{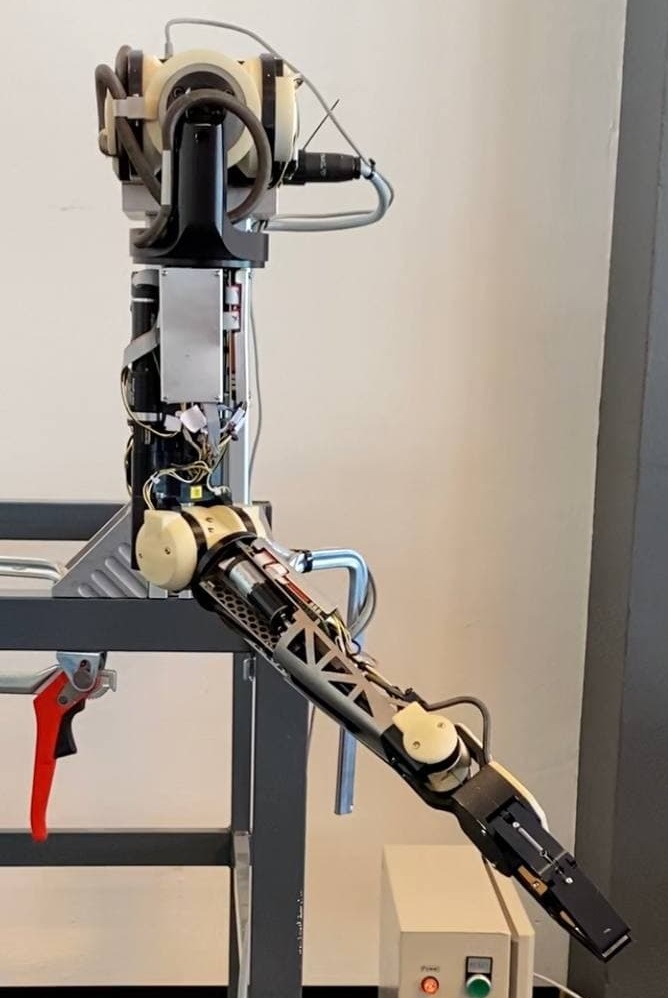}
	\caption{The PERA setup}\label{pera}
\end{figure}

To compare against our PBIC approach \eqref{eq:Control_Law} with a standard approach, we employ a general PBC methodology without an integrator, i.e.,\footnote{This control structure is generally known as energy shaping and damping injection controller. Such methodology can be found in  \cite{borja2020new},\cite{gomez2004physical},\cite{ortega2002stabilization}.} 
\begin{equation}\label{eq:controlcomp}
	u=-K_{\tt{es}}( q-q_\star)-K_{\tt{di}} \dot{q}.
\end{equation}
Table \ref{gains} shows the gain selection for \eqref{eq:controlcomp}~--~depicted as {``Case 1"}~--~and the gains selection for our controller \eqref{eq:Control_Law}~--~depicted as ``Case 2" and ``Case 3". The gain selection for these last two cases verify \eqref{eq:cond1}.
\begin{table}[t]
	\centering
	\caption{Controllers gains}\label{gains}
	\begin{tabular}{cccc}
		\hline
		& Case 1           & Case 2              & Case 3                 \\ \hline
		$K_{es}$ & diag\{75,50,50\} & -                   & -                      \\
		$K_{di}$ & diag\{7,5,5\}    & -                   & -                      \\
		$K_p$  & -                & \multicolumn{2}{c}{diag\{10,7.5,7.5\}}       \\
		$K_i$  & -                & \multicolumn{2}{c}{diag\{15,10,10\}}         \\
		$K_d$  & -                & \multicolumn{2}{c}{diag\{7,5,5\}}            \\
		$M_d$  & -                & diag\{0.2,0.2,0.2\} & diag\{0.06,0.06,0.06\} \\ \hline
	\end{tabular}
\end{table}
\begin{figure}[t]
	\centering
	\includegraphics[width=\columnwidth]{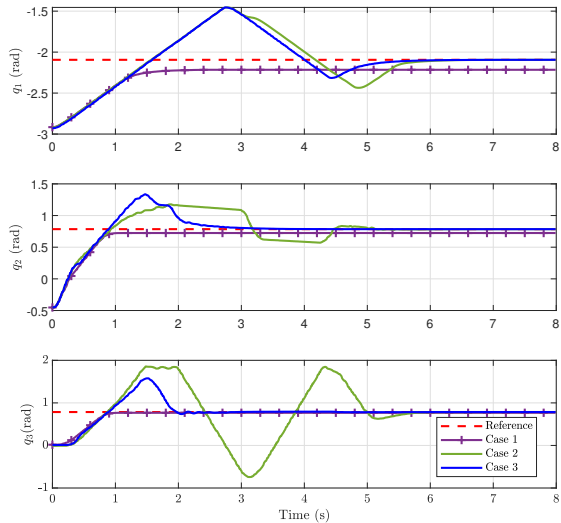}
	\caption{Trajectories of the angular positions}\label{exp}
\end{figure}

The results for each case in Table \ref{gains} are depicted in Fig.\ref{exp} and a video with the experimental results can be found in: \url{https://youtu.be/eussb6kHhQc}.

Note that there is a steady-state error in the joint positions for ``Case 1," which trajectories are obtained with controller \eqref{eq:controlcomp}. This error may be due to unmodeled physical phenomena (e.g., dry friction or asymmetry of the motors) or noise in the measurement sensors. Then, the steady-state error is removed by the implementation of our controller \eqref{eq:Control_Law} for ``Case 2" (or in``Case 3").

We apply \eqref{eq:rate2} to improve the upper bound of the rate of convergence, which is directly proportional to $\beta_{\max}$.  Recall that $${\beta_{\max}:=\max\{\maxlamb{\inv{M_d}},\maxlamb{K_p},\maxlamb{\inv{K_i}}\}}.$$ 
It follows that $\beta_{\max}=10$ for ``Case 2'', and by increasing $\maxlamb{\inv{M_d}}$  in ``Case 3'', we get that ${\beta_{\max}=16.67}$. Thus, as expected, the rate of convergence increases in ``Case 3" with respect to ``Case 2".  However, from tuning rule \eqref{eq:ov}, note that the increment of the rate of convergence comes at the expense of a higher overshoot in the trajectories, as verified particularly for joints $q_1$ and $q_2$.

\section{Conclusions and Future work}\label{sec:concl}
We have shown a PBIC approach that exponentially stabilizes a fully actuated mechanical system in the desired equilibrium even at the presence of disturbances. In particular, the latter is achieved partly by exploiting the gyroscopic forces contained in the mechanical system, which in literature usually are canceled via feedback linearization. Moreover, we have also provided tuning rules to prescribe a performance in terms of the rate of convergence, the maximum overshoot of the output, and the nonlinear stability margin. Additionally, we have verified the effectiveness of the approach and its tuning rules in a three-DoF robotic arm. 

Regarding future work, we aim to develop further tuning rules for our PBIC approach to prescribe the behavior in the vicinity of the closed-loop system in terms of oscillations, rise time, or settling time. Moreover, we aim to adapt our methodology into an underactuated mechanical setting.

\bibliographystyle{ieeetr}
\bibliography{ref.bib} 
\end{document}